\documentclass[lettersize,journal]{IEEEtran}
\usepackage{cite}
\usepackage{amsmath,amssymb,amsfonts}
\usepackage{amsthm}
\usepackage{algorithm}
\usepackage{graphicx}
\usepackage{textcomp}
\usepackage{xcolor}
\usepackage{bm}
\usepackage{balance}
\usepackage{makecell}
\usepackage{algpseudocode}
\usepackage{algorithm}
\usepackage{todonotes}
\usepackage{algorithm}
\usepackage{algpseudocode}
\usepackage{amsmath}

\floatstyle{boxed}

\newtheorem{theorem}{Theorem}
\newtheorem{lemma}{Lemma}

\newtheorem{definition}{Definition}
\newtheorem{remark}{Remark}
\newtheorem{example}{Example}
\def\BibTeX{{\rm B\kern-.05em{\sc i\kern-.025em b}\kern-.08em
    T\kern-.1667em\lower.7ex\hbox{E}\kern-.125emX}}
\begin{document}

\onecolumn

\title{Codes with Biochemical Constraints and Single Error Correction for DNA-Based Data Storage
}

\author{Shu Liu, Chaoping Xing, Yaqian Zhang
\thanks{
Shu Liu is with National Key Laboratory on Wireless Communications, University of Electronic Science and Technology of China, Chengdu, China (shuliu@uestc.edu.cn).
Chaoping Xing and Yaqian Zhang are with School of Electronic Information and Electric Engineering, Shanghai Jiao Tong University, Shanghai, China. (xingcp@sjtu.edu.cn, zhangyq9@sjtu.edu.cn).
}}

%

\maketitle

\begin{abstract}
In DNA-based data storage, DNA codes with biochemical constraints and error correction are designed to protect data reliability.
Single-stranded DNA sequences with secondary structure avoidance (SSA) help to avoid undesirable secondary structures which may cause chemical inactivity. Homopolymer run-length limit and GC-balanced limit also help to reduce the error probability of DNA sequences during synthesizing and sequencing. In this letter, based on a recent work \cite{bib7}, we construct DNA codes free of secondary structures of stem length $\geq m$ and have homopolymer run-length $\leq\ell$ for odd $m\leq11$ and $\ell\geq3$ with rate $1+\log_2\rho_m-3/(2^{\ell-1}+\ell+1)$, where $\rho_m$ is in Table \ref{tm}. In particular, when $m=3$, $\ell=4$, its rate tends to 1.3206 bits/nt, beating a previous work by  Benerjee {\it et al.}. We also construct DNA codes with all of the above three constraints as well as single error correction. At last, codes with GC-locally balanced constraint are presented.
\end{abstract}

\section{Introduction}
In DNA-based storage systems, data is translated to DNA sequences consisting of four nucleotides A, T, C, G, and stored in a DNA pool.
To protect data from errors, DNA sequences are desired to be able to correct insertion, deletion and substitution errors, and also satisfy some constraints that help to reduce error probability  \cite{b2, bib1}.

For DNA alphabet $\{{\rm A, T, C, G}\}$, the {\it Watson-Crick complement} gives $\overline{\rm A}={\rm T}$, $\overline{\rm T}={\rm A}$, $\overline{\rm C}={\rm G}$, $\overline{\rm G}={\rm C}$. The reverse-complement of a DNA sequence $x_1x_2...x_n$ with $x_i\in\{{\rm A, T, C, G}\}$ is defined as $\overline{x_n\cdots x_2x_1}$.
A secondary structure of a DNA sequence is formed by folding the sequence back upon itself, and this may lead to chemical inactivity of the sequence.
For example, in ATTCGGAA, the two subsequences TTC, GAA are reverse-complements of each other and may bind to each other after pairing of A with T and G with C. Thus it forms a secondary structure with a loop and a stem of length 3. For a positive integer $m$, an $m$-SSA (secondary structure avoidance) sequence does not contain two non-overlapping reverse-complement consecutive subsequences of length $m$. The SSA constraint can reduce error probability of DNA sequences when reading data \cite{bib1}. Another constraint is homopolymer run-length limit, indicating that the maximum number of repetitive consecutive symbols in a DNA sequence should be $\leq\ell$, which is called $\ell$-run-length limited property.
Moreover, the percentage of nucleotides G and C in the whole sequence (GC content) is expected to be about $50\%$, called GC-balanced limit constraint.

For a quaternary DNA code $\mathcal{C}$ of length $n$, the code rate is defined as $\log_2{|\mathcal{C}|}/n$ (bits/nt), which indicates the storage overhead.
A challenging problem is to construct DNA codes satisfying multiple constraints and error correction with high code rate.
In the literature, several works \cite{BB2021, bib6, bib7} are presented to construct $m$-SSA codes for different $m$. In particular, \cite{bib7} investigated codes for $m\leq6$, which achieve optimal rate for $m=2, 3$. The work \cite{BB2021} designed 3-SSA and 4-run-length limited codes with rate 1.1609.
Run-length limited and GC-balanced codes are constructed in \cite{b9, bib8, bib9, bib10}.
The work \cite{TCS} considered 3-SSA with run-length limit and GC-balanced limit using constacyclic codes. \cite{Benerjee} gave codes that satisfy 3-SSA, 3-run-length limited and GC-balanced properties as well as reverse constraint and reverse-complement constraint, with code rate 0.8617.
The works \cite{b10, b11} constructed  $\ell$-run-length limited and GC-balanced codes with GC content within the range $0.5\pm\epsilon$, called $(\epsilon, \ell)$-constrained codes. The work \cite{bib11capacity} studied capacity-achieving $(\epsilon, \ell)$-constrained codes.

In the scenario of error correction, we call insertion errors and deletion errors are insertion and deletion (insdel for short) errors, and edit errors present insertion, deletion and substitution errors.
\cite{b10, b11} extended the constrained codes to also correct single insdel/edit error.
Cai {\it et al.} \cite{bib12local} partitioned each codeword to local segments (partitions) of same length $s$ and gave a code design to enable error correction and almost GC balance in each segment with code rate about $2(1-\log_2s/s)$ where $s=\Omega(\log n)$.
Another definition of locally balanced constraint is introduced in \cite{b12, 2022ISIT} and means that any length-$s$ subsequence of a DNA codeword of length $n$ satisfies the GC-balanced property (called GC-locally balanced).

In this letter, we consider the problem of constructing DNA codes with multiple constraints and error correction. The main contributions are as follows.
\begin{itemize}
\item We present a construction of $m$-SSA and $\ell$-run-length limited DNA codes for odd $m\leq11$ and $\ell\geq3$. It has code rate $1+\log_2\rho_m-3/(2^{\ell-1}+\ell+1)$, where $\rho_m$ is given in Table \ref{tm}. In particular, when $m=3$ and $\ell=4$, the code has rate approximately 1.3206, which performs better than the code in \cite{BB2021}.
\item For any integer $\ell\geq3$, and a small real $0<\epsilon<0.5$, we construct 3-SSA, $\ell$-run-length limited and GC-balanced codes with GC content within $0.5\pm\epsilon$. The codes are further extended to also correct single edit error by a concatenation technique using Hamming codes and quaternary VT codes with rate approximately $(1-\frac{\log t}{t})(1+\log_2\rho)$, where $t$ is a positive integer and $\rho$ is the largest real root of the equation $x^{\ell+1}-\sum_{i=0}^{\ell-2}x^{i}=0$.
\item We give a transformation to convert GC-balanced codes to GC-locally balanced codes. The obtained GC-locally balanced codes also maintain 3-SSA, $\ell$-run-length limited and GC-balanced property with GC content within $0.5\pm\epsilon$, for some $\ell\geq3$, and $0<\epsilon<0.5$, as well as error correction property.
\end{itemize}

The rest of the letter is organized as follows. Section \ref{sec_pre} introduces some preliminaries. Section \ref{sec-3} presents $m$-SSA and $\ell$-run-length limited DNA codes for any odd $m\leq11$ and $\ell\geq3$. Section \ref{sec-4} gives DNA codes with multiple constraints as well as error correction. GC-locally balanced codes are constructed in Section \ref{sec-5}.

\section{Preliminary}\label{sec_pre}
\subsection{Notations and definitions}
For positive integers $n$ and $i<j$, denote $[n]=\{1,...,n\}$, $[i,j]=\{i,i+1,...,j\}$ and $[i,j)=\{i,i+1,...,j-1\}$. Let $\mathbb{F}_q$ be a finite field and $\mathbb{Z}_q=\{0,1,...,q-1\}$. Denote $\sum_{\rm DNA}=\{{\rm A, T, C, G}\}$ to be the DNA alphabet. Define the complement $\overline{\rm A}={\rm T}$, $\overline{\rm T}={\rm A}$, $\overline{\rm C}={\rm G}$, $\overline{\rm G}={\rm C}$.
Let $\bm{x}=(x_1,\cdots,x_n)\in\sum_{\rm DNA}^n$, we say a subsequence of $\bm{x}$ when it contains consecutive symbols of $\bm{x}$.
The {\it reverse-complement} of $\bm{x}$ is defined as ${\rm RC}(\bm{x})=\overline{x_n,\cdots,x_1}$.
For some positive integer $m$, we say $\bm{x}$ is {\it $m$-SSA} (secondary structure avoidance), if $\bm{x}$ does not have two non-overlapping subsequences $\bm{y}$ and $\bm{y}'$ of length $\geq m$ such that $\bm{y}={\rm RC}(\bm{y}')$.
Define the {\it GC weight} of $\bm{x}$ to be $\mathrm{wt}_{\mathrm{GC}}(\bm{x})=|\{i\in[n]: x_i={\rm G}~{\rm or} ~{\rm C}\}|$.

\begin{definition}
Let $0<\epsilon<0.5$ be a small real and $n, s$ be two positive integers.
Let $\bm{x}=(x_1, \cdots, x_n)\in\sum_{\rm DNA}^n$.
\begin{itemize}
\item \textbf{GC-$\epsilon$-globally balanced.} $\bm{x}$ is called GC-$\epsilon$-globally balanced if $|\frac{\mathrm{wt}_{\mathrm{GC}}(\bm{x})}{n}-0.5|\leq\epsilon$.
\item  \textbf{GC-$(s,\epsilon)$-partition balanced.} Suppose $s\mid n$ and $\bm{x}$ is partitioned into $\frac{n}{s}$ segments of size $s$, that is, $\bm{x}=(\bm{x}^{(1)},\cdots,\bm{x}^{(\frac{n}{s})})$, where $\bm{x}^{(i)}=(x _{(i-1)s+1},\cdots,x_{is})$, $i\in[\frac{n}{s}]$. Then $\bm{x}$ is called {\it GC-$(s,\epsilon)$-partition balanced} if $|\frac{\mathrm{wt}_{\mathrm{GC}}(\bm{x}^{(i)})}{s}-0.5|\leq\epsilon$ for all $i\in[\frac{n}{s}]$.
\item \textbf{GC-$(s,\epsilon)$-locally balanced.} Suppose $s<n$ and for $1\leq i\leq n-s+1$, denote $\bm{x}_i=(x_{i},x_{i+1},\cdots,x_{i+s-1})$, then $\bm{x}$ is called {\it GC-$(s,\epsilon)$-locally balanced} if $|\frac{\mathrm{wt}_{\mathrm{GC}}(\bm{x}_i)}{s}-0.5|\leq\epsilon$ for all $i\in[n-s+1]$.
\end{itemize}
\end{definition}

It is obvious that when $s\mid n$, a GC-$(s,\epsilon)$-locally balanced sequence is also GC-$(s,\epsilon)$-partition balanced and GC-$\epsilon$-globally balanced.
For a binary sequence $\bm{z}\in\mathbb{Z}_2^n$, similar definitions can be obtained by replacing GC-weight by Hamming weight of $\bm{z}$ and removing the term ``GC".
In the following, for GC-$\epsilon$-globally balanced sequences, we usually omit ``globally", when it is clear from the context.

\begin{definition}
Let $\bm{x}\in\sum_{\rm DNA}^n$ (resp. $\mathbb{Z}_2^n$) and write $\bm{x}=\left(a_1^{(n_1)},\cdots,a_t^{(n_t)}\right)$ for some positive integers $n_1,\cdots,n_t$ and $a_1,\cdots,a_t\in\sum_{\rm DNA}$ (resp. $\mathbb{Z}_2$) satisfying $n_1+\cdots+n_t=n$ and $a_i\neq a_{i+1}$ for all $1\leq i\leq t-1$. That is, $a_i^{(n_i)}$ contains $n_i$ consecutive symbols of $a_i$. We call each $a_i^{(n_i)}$ a run of length $n_i$ for $i\in[t]$.
\end{definition}

A code is said to be {\it $\ell$-run-length limited} if every codeword has run length at most $\ell$ for some integer $\ell>0$.

\subsection{Varshamov-Tenengolts (VT) codes}
The {\it Varshamov-Tenengolts codes} are designed for the insertion/deletion channel and are asymptotically optimal single insdel correcting codes.
In \cite{b15}, Tenengolts generalized binary VT codes to $q$-ary ones for $q>2$.
Next we review the definition of $q$-ary VT codes which will be used later.
For a $q$-ary sequence $\bm{x}=(x_1,\cdots,x_n)\in \mathbb{Z}_{q}^n$, define the {\it signature} of $\bm{x}$ to be the binary vector $\alpha(\bm{x})=(\alpha_1,\cdots,\alpha_{n-1})\in\{0,1\}^{n-1}$, where $\alpha_i=1$ if $x_{i+1}\geq x_{i}$ and $\alpha_i=0$ if $x_{i+1}<x_{i}$, for $i\in[n-1]$.
Define the {\it binary VT syndrome} of $\alpha(\bm{x})$ to be $\mathrm{Syn}(\alpha(\bm{x}))=\sum_{i=1}^{n-1}i\alpha_i$.

\begin{definition}[$q$-ary VT codes]
For $a\in\mathbb{Z}_{n}$ and $b\in\mathbb{Z}_q$, let
\begin{equation*}
\begin{aligned}
\mathrm{T}_{a,b}(n;q)=\{\bm{x}\in\mathbb{Z}_q^n:\ &\mathrm{Syn}(\alpha(\bm{x}))\equiv a \ (\mathrm{mod}\ n)   \\
 &\mathrm{and} \ \sum_{i=1}^{n}x_i\equiv b\ (\mathrm{mod}\ q)\},
\end{aligned}
\end{equation*}
then $\mathrm{T}_{a,b}(n;q)$ gives a class of $q$-ary VT codes.
\end{definition}

The $q$-ary VT codes can correct one insdel error with a linear-time decoder. For any $n$ and $q$, there exist some $a\in\mathbb{Z}_{n}$ and $b\in\mathbb{Z}_q$ s.t. $|\mathrm{T}_{a,b}(n;q)|\geq \frac{q^n}{qn}$ by the pigeonhole principle. Tenengolts provided a systematic encoder with redundancy $\log_2{n}+C_q$, where $C_q$ is independent of the code length $n$.

\section{Code constructions with secondary structure avoidance and run-length limit}\label{sec-3}
Given some $m>0$, we adopt the notion of {\it TC-m-dominant} in \cite{bib7} as follows.
A DNA sequence $\bm{x}$ is called TC-$m$-dominant, if for every subsequence of length $m$, the sum of appearances of T and C is larger than $m/2$.
Similarly, a binary sequence is called ``0"-$m$-dominant, if for every subsequence of length $m$, the sum of appearances of the symbol ``0" is larger than $m/2$.
Let $S_{\rm TC}(m,n)\subseteq\sum_{\rm DNA}^n$ be the set of TC-$m$-dominant DNA sequences of length $n$ and $S_0(m,n)\subseteq\mathbb{Z}_2^n$ be the set of ``0"-$m$-dominant binary sequences of length $n$. \cite{bib7} has investigated $S_{\rm TC}(m,n)$ and $S_0(m,n)$ for odd $m\leq11$ using recursive construction as well as a brute-force algorithm.
It is also illustrated in \cite{bib7} that a TC-$m$-dominant sequence must be an $m$-SSA sequence.
\begin{lemma}\label{TC-dominant}\cite{bib7}
Wnen $m$ is odd, a TC-$m$-dominant sequence must be an $m$-SSA sequence. Thus a code $\mathcal{C}$ consisting of all TC-$m$-dominant sequences of length $n$ is an $m$-SSA code.
\end{lemma}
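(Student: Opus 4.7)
The plan is to prove the contrapositive: if $\bm{x}\in\sum_{\mathrm{DNA}}^n$ is not $m$-SSA, then $\bm{x}$ is not TC-$m$-dominant. So I would start by assuming there exist two non-overlapping consecutive subsequences $\bm{y}=(y_1,\ldots,y_m)$ and $\bm{y}'=(y'_1,\ldots,y'_m)$ of $\bm{x}$ with $\bm{y}=\mathrm{RC}(\bm{y}')$, i.e., $y_i=\overline{y'_{m+1-i}}$ for every $i\in[m]$. My goal is to find one of these two subsequences that violates the TC-dominance condition.

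The key structural observation is that Watson--Crick complementation interchanges the sets $\{\mathrm{T},\mathrm{C}\}$ and $\{\mathrm{A},\mathrm{G}\}$: indeed $\overline{\mathrm{T}}=\mathrm{A}$ and $\overline{\mathrm{C}}=\mathrm{G}$. Consequently, a position $i$ contributes to the TC-count of $\bm{y}$ if and only if position $m+1-i$ contributes to the AG-count of $\bm{y}'$ (reversal does not change counts). Letting $t=|\{i:y_i\in\{\mathrm{T,C}\}\}|$, this gives the identity
\[
|\{j:y'_j\in\{\mathrm{T,C}\}\}|=m-t.
\]

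Now I would apply the TC-$m$-dominance hypothesis to both length-$m$ subsequences $\bm{y}$ and $\bm{y}'$ (which are themselves windows of length exactly $m$ inside $\bm{x}$). Dominance forces $t>m/2$ and simultaneously $m-t>m/2$, i.e.\ $t<m/2$. Since $m$ is odd, $m/2$ is not an integer and the two strict inequalities $t>m/2$ and $t<m/2$ are genuinely incompatible (one would need $t\geq(m+1)/2$ and $t\leq(m-1)/2$ at the same time). This contradiction establishes that $\bm{x}$ must be $m$-SSA, and the second sentence of the lemma follows immediately from the first.

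There is no real obstacle here; the only subtle point is remembering that the reversal in $\mathrm{RC}$ is irrelevant for counting purposes, while the complement is precisely what swaps $\{\mathrm{T,C}\}\leftrightarrow\{\mathrm{A,G}\}$, and that the parity hypothesis on $m$ is exactly what makes the two strict inequalities mutually exclusive. For even $m$, the balanced case $t=m/2$ would not be ruled out, which is why the lemma is stated for odd $m$ only.
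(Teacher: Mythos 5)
Your proof is correct. The paper itself gives no proof of this lemma --- it is imported verbatim from \cite{bib7} --- so there is nothing internal to compare against, but your counting argument (complementation swaps $\{\mathrm{T},\mathrm{C}\}$ with $\{\mathrm{A},\mathrm{G}\}$, reversal preserves counts, so the two windows would need TC-counts $t>m/2$ and $m-t>m/2$ simultaneously, impossible when $m$ is odd) is exactly the standard one. The only step worth making explicit is the reduction from the formal definition's ``length $\geq m$'' to length exactly $m$: if $\bm{y}=\mathrm{RC}(\bm{y}')$ with $|\bm{y}|=|\bm{y}'|\geq m$, then the length-$m$ prefix of $\bm{y}$ is the reverse complement of the length-$m$ suffix of $\bm{y}'$, and these are still non-overlapping length-$m$ windows of $\bm{x}$ to which TC-$m$-dominance applies.
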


\begin{lemma}\label{TC-3}\cite{bib7}
For odd $m\leq11$, $|S_0(m,n)|$ is approximately $\rho_m^n$, where $\rho_m$ is given in Table \ref{tm}.
\end{lemma}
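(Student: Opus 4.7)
The plan is to count $|S_0(m,n)|$ via a finite-state automaton and read off the growth rate from the Perron eigenvalue of its transfer matrix. Since ``0''-$m$-dominance is a sliding-window constraint of width $m$, the natural state is the last $m-1$ symbols. Concretely, I would take the state space to be $\mathcal{S}\subseteq\mathbb{Z}_2^{m-1}$ consisting of those $(m-1)$-tuples that do not already violate the local constraint, and define a transition $s\to s'$ whenever $s'$ is the length-$(m-1)$ suffix of $sb$ for some $b\in\{0,1\}$ such that the length-$m$ word $sb$ contains at least $(m+1)/2$ zeros. Let $T\in\{0,1\}^{|\mathcal{S}|\times|\mathcal{S}|}$ be the resulting $0/1$ transition matrix.

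With this setup, the number of ``0''-$m$-dominant sequences of length $n$ equals, up to a boundary correction of constant size, $\mathbf{1}^{\mathsf{T}}T^{\,n-m+1}\mathbf{v}_0$ for an appropriate starting vector $\mathbf{v}_0$. One checks that $T$ is irreducible and aperiodic: the all-zero state is reachable from every state (append enough zeros) and has a self-loop, so both properties follow. Applying Perron--Frobenius,
\[
|S_0(m,n)|\;=\;\Theta(\rho_m^{\,n}),
\]
where $\rho_m$ is the spectral radius of $T$, matching the claim.

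To populate the table for odd $m\leq 11$, I would either numerically diagonalize $T$ (whose size is at most $2^{m-1}\leq 1024$) or, following \cite{bib7}, set up a system of linear recurrences for the per-state counts $N_{m,s}(n)$ and extract $\rho_m$ as the dominant real root of the associated characteristic polynomial; the two approaches agree.

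The main obstacle is not the asymptotic reasoning, which is standard once the automaton is written down, but rather verifying that the state graph is strongly connected and aperiodic for every odd $m\leq 11$ (so that Perron--Frobenius applies with a unique simple dominant eigenvalue) and extracting $\rho_m$ in closed or tabulated form. For $m$ much larger than $11$ the state space $2^{m-1}$ blows up and the characteristic polynomial becomes intractable by hand, which is exactly the reason \cite{bib7} restricts to $m\leq 11$ and falls back on brute-force enumeration for the heaviest cases; I would not attempt to push past that range.
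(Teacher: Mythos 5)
The paper offers no proof of this lemma at all --- it is imported verbatim from \cite{bib7}, which obtains these growth rates via linear recurrences and brute-force enumeration, i.e.\ essentially the transfer-matrix computation you describe, so your route matches the source. Your outline is sound, but the irreducibility step as written is loose: an $(m-1)$-tuple cannot ``violate'' a width-$m$ window constraint, so your $\mathcal{S}$ contains states such as $1^{m-1}$ from which appending a $0$ is \emph{not} legal (the window $s0$ needs at least $(m-1)/2$ zeros in $s$); you must first restrict to the essential (recurrent) states, and you need reachability in \emph{both} directions --- from every essential state to $0^{m-1}$ (appending zeros, which is legal exactly on essential states) and from $0^{m-1}$ to every essential state (prepending zeros preserves validity) --- before Perron--Frobenius applies with a simple dominant eigenvalue. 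Both facts are easy, but neither is stated, and the spectral radius of your full $2^{m-1}$-state matrix equals the right answer only because transient components contribute nothing asymptotically, which also deserves a line.
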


{\footnotesize
\begin{table}[h]
  \centering
\caption{The Value of $\rho_m$}\label{tm}
\begin{tabular}{c|c|c|c|c|c}
\hline
        $m$    &3 &5 &7 &9 &11 \\ \hline
$\rho_m$ &1.5515 &1.6980 &1.7698 & 1.8131 &1.8423 \\ \hline
\end{tabular}
\end{table}
}

In the following, we present $m$-SSA and $\ell$-run-length limited codes for odd $m\leq11$ and $\ell\geq3$ using $S_0(m,n)$ in \cite{bib7}.
Define a map $\tau$ from DNA alphabet to binary pairs as follows.
\begin{equation}\label{correspondence}
\tau({\rm T})=00, ~~\tau({\rm C})=01, ~~\tau({\rm A})=10, ~~\tau({\rm G})=11.
\end{equation}
For any
$\bm{c}=(c_1,c_2,\cdots,c_n)\in\sum_{\rm DNA}^n$, define $\tau(\bm{c})=(\tau(c_1),\tau(c_2),\cdots,\tau(c_n))\in\mathbb{Z}_2^{2n}$.
For any two binary sequences $\bm{x}=(x_1,x_2,\cdots,x_n)$ and $\bm{y}=(y_1,y_2,\cdots,y_n)$ of length $n$, define $\bm{x}||\bm{y}=(x_1y_1,x_2y_2,\cdots,x_ny_n)\in\mathbb{Z}_2^{2n}$.

By the definition of map $\tau$ and Lemma \ref{TC-dominant}, we can directly get the following Lemma.
\begin{lemma}\label{lemB}
 Let $\bm{c}=(c_1,c_2,\cdots,c_n)\in\mathbb{Z}_2^{n}$ be ``0"-$m$-dominant, and $\bm{d}=(d_1,d_2,\cdots,d_n)\in\mathbb{Z}_2^{n}$ be $\ell$-run-length limited, then $\tau^{-1}(\bm{c}||\bm{d})$ is $m$-SSA and $\ell$-run-length limited.
\end{lemma}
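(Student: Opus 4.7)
My plan is to verify the two claimed properties of $\bm{s}:=\tau^{-1}(\bm{c}\|\bm{d})$ separately, using only the structure of the map $\tau$ defined in \eqref{correspondence} together with the cited properties of $\bm{c}$ and $\bm{d}$. The key structural observations about $\tau$ are: (a) a pair whose first bit is $0$ is decoded as T or C (since $\tau(\mathrm{T})=00$, $\tau(\mathrm{C})=01$), and a pair whose first bit is $1$ is decoded as A or G; and (b) $\tau$ is a bijection from $\Sigma_{\mathrm{DNA}}$ to $\{0,1\}^2$, so equality of DNA symbols is equivalent to equality of the corresponding bit pairs.

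For the $m$-SSA property, the plan is to argue that $\bm{s}$ is TC-$m$-dominant and then invoke Lemma \ref{TC-dominant}. Write $\bm{s}=(s_1,\ldots,s_n)$, so that $s_i=\tau^{-1}(c_i d_i)$. By observation (a), $s_i\in\{\mathrm{T},\mathrm{C}\}$ if and only if $c_i=0$. Fix any window of $m$ consecutive positions $[i,i+m-1]$; the number of indices $j$ in this window with $s_j\in\{\mathrm{T},\mathrm{C}\}$ equals the number of $j$ with $c_j=0$, which exceeds $m/2$ because $\bm{c}$ is ``0''-$m$-dominant. Hence $\bm{s}$ is TC-$m$-dominant, and since $m$ is odd (as required for odd $m\leq 11$ in the current section), Lemma \ref{TC-dominant} yields the $m$-SSA property.

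For the run-length property, the plan is to show that every run in $\bm{s}$ projects to an equally long run in $\bm{d}$. By observation (b), if $s_i=s_{i+1}=\cdots=s_{i+k-1}$ for some length-$k$ run, then in particular $d_i=d_{i+1}=\cdots=d_{i+k-1}$. Thus any run in $\bm{s}$ of length $k$ forces a run in $\bm{d}$ of length at least $k$; since $\bm{d}$ has all runs of length $\leq\ell$, so does $\bm{s}$.

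There is no real obstacle here — both directions follow directly from how $\tau$ was designed, with the first bit controlling the TC/AG dichotomy (hence the SSA condition, through Lemma \ref{TC-dominant}) and the second bit combined with the first uniquely determining the DNA symbol (hence the run-length condition). The only substantive step is recognizing that the ``0''-$m$-dominance of $\bm{c}$ is tailored so as to translate symbol-for-symbol into TC-$m$-dominance of $\bm{s}$, which is the reason the construction works.
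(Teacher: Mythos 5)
Your proof is correct and follows exactly the route the paper intends: the paper states Lemma~\ref{lemB} without proof, remarking only that it follows ``by the definition of map $\tau$ and Lemma~\ref{TC-dominant},'' and your argument (first bit of each pair governs the TC/AG split, hence ``0''-$m$-dominance of $\bm{c}$ gives TC-$m$-dominance and then $m$-SSA via Lemma~\ref{TC-dominant} for odd $m$; equality of DNA symbols forces equality of second bits, so runs in $\tau^{-1}(\bm{c}\|\bm{d})$ embed into runs of $\bm{d}$) is precisely the omitted verification. No gaps.
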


Recall the work \cite{b11} presented an RLL encoder to transform any sequence in $\mathbb{Z}_q^{n-1}$ to an $\ell$-run-length limited sequence by introducing one redundant symbol on condition that $n\leq 2^{\ell-1}+\ell-1$, using the sequence replacement technique.
Next we give the following encoding algorithm.


%

Let $m, \ell, n, t$ be positive integers with odd $m\leq11$, $\ell\geq3$ and $n\leq 2^{\ell-1}+\ell-1$.

\begin{algorithm}
\caption{Construction $I$ for DNA codes }\label{ALG:1}
\begin{algorithmic}[1]
\Require $\bm{x}\in S_0(m,(n+2)t)$ and $\bm{y}\in\mathbb{Z}_2^{(n-1)t}$.
\Ensure $\bm{c}\in \sum_{\rm DNA}^{(n+2)t}$.
\State Let $\bm{y}=(\bm{y}_1,\bm{y}_2,\cdots,\bm{y}_t)$ with $|\bm{y}_i|=n-1$ for $i\in[t]$.
\State Encode each $\bm{y}_i$ to be a $\ell$-run-length limited sequence $\bm{y}'_i\in\mathbb{Z}_2^n$ by the RLL encoder in \cite{b11}. Denote $\bm{y}'_i=(y'_{i,1},y'_{i,2},\cdots,y'_{i,n})$ for $i\in[t]$.
\State For $i\in[t]$, set $z_{i,1}=\mathbb{Z}_2\setminus\{y'_{i,1}\}$ and $z_{i,2}=\mathbb{Z}_2\setminus\{y'_{i,n}\}$.
\State Update each $\bm{y}'_i$ to be $z_{i,1}\bm{y}'_iz_{i,2}$ and set $\bm{y}''=(\bm{y}'_1,\bm{y}'_2,\cdots,\bm{y}'_t)\in\mathbb{Z}_2^{(n+2)t}$.
\State Output $\bm{c}=\tau^{-1}(\bm{x}||\bm{y}'')$.
\end{algorithmic}
\end{algorithm}

One can directly obtain a decoder according to the inverse process of Algorithm~\ref{ALG:1} and the RLL decoder given in \cite{b11}. Moreover, in step $3$ of Algorithm~\ref{ALG:1}, the updated sequences $\bm{y}'_i, i\in[t]$ and $\bm{y}''$ are still $\ell$-run-length limited, since $\ell\geq3$. Thus the output codeword $\bm{c}$ is $m$-SSA and $\ell$-run-length limited by Lemma \ref{lemB}.

\begin{remark}
The DNA code $\mathcal{C}$ obtained by Algorithm~\ref{ALG:1} has length $(n+2)t$, cardinality $2^{(n-1)t}|S_0(m,(n+2)t)|$.
 It has rate $\approx1+\log_2\rho_m-3/(2^{\ell-1}+\ell+1)$ when $n=2^{\ell-1}+\ell-1$ and $t$ tends to infinity.
In particular, when $m=3$, $\ell=4$, $n=11$, the code rate tends to $\frac{10}{13}+\log_2\rho_3\approx1.3206$. Note the work \cite{BB2021} gave $3$-SSA and $4$-run-length limited DNA codes with rate 1.1609.
\end{remark}

\section{Code constructions with multiple constraints and error correction}\label{sec-4}
In this section, we firstly present a construction of 3-SSA DNA codes that is also $\ell$-run-length limited and GC-$\epsilon$-balanced for some integer $\ell\geq3$ and some real $0<\epsilon<0.5$. Then we extend it to also correct errors by using Hamming codes and quaternary VT codes.

\subsection{Codes with multiple biochemical constraints}\label{subsecA}
Recall the map $\tau$ defined in (\ref{correspondence}), we have the following Lemma.
\begin{lemma}\label{lem-1}
 Let $\bm{c}=(c_1,c_2,\cdots,c_n)\in\mathbb{Z}_2^{n}$ be ``$0$"-$3$-dominant and $\ell$-run-length limited, and $\bm{d}=(d_1,d_2,\cdots,d_n)\in\mathbb{Z}_2^{n}$ be $\epsilon$-balanced, then the DNA sequence $\tau^{-1}(\bm{c}||\bm{d})$ is $3$-SSA, $\ell$-run-length limited and GC-$\epsilon$-balanced.
\end{lemma}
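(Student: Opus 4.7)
The plan is to unpack the map $\tau$ bit-by-bit and verify each of the three constraints independently, treating $\bm{c}$ as the sequence of first bits and $\bm{d}$ as the sequence of second bits of the $\tau$-image of the output codeword. From (\ref{correspondence}) we see that the first bit of $\tau(x)$ is $0$ exactly when $x\in\{\rm T,C\}$ and $1$ when $x\in\{\rm A,G\}$, while the second bit of $\tau(x)$ is $1$ exactly when $x\in\{\rm C,G\}$ and $0$ when $x\in\{\rm T,A\}$. Writing $\bm{z}=\tau^{-1}(\bm{c}\|\bm{d})=(z_1,\dots,z_n)$, we therefore have $z_i\in\{\rm T,C\}$ iff $c_i=0$, and $z_i\in\{\rm C,G\}$ iff $d_i=1$.

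The three constraints can now be checked separately. For 3-SSA, note that for any length-3 window of $\bm{z}$, the number of positions in $\{\rm T,C\}$ is exactly the number of $0$'s in the corresponding window of $\bm{c}$; since $\bm{c}$ is ``$0$"-$3$-dominant this count exceeds $3/2$ in every window, so $\bm{z}$ is TC-$3$-dominant and hence $3$-SSA by Lemma \ref{TC-dominant}. For the run-length constraint, observe that $z_i=z_{i+1}$ forces both $c_i=c_{i+1}$ and $d_i=d_{i+1}$, so a DNA run of length $k$ projects to binary runs of length $k$ in both $\bm{c}$ and $\bm{d}$; in particular the maximum run length of $\bm{z}$ is at most that of $\bm{c}$, which is $\leq\ell$ by hypothesis. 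Finally, $\mathrm{wt}_{\mathrm{GC}}(\bm{z})=|\{i:z_i\in\{\rm C,G\}\}|=|\{i:d_i=1\}|=\mathrm{wt}(\bm{d})$, so $|\mathrm{wt}_{\mathrm{GC}}(\bm{z})/n-0.5|=|\mathrm{wt}(\bm{d})/n-0.5|\leq\epsilon$, giving the GC-$\epsilon$-balanced property.

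There is essentially no hard step: the argument is a routine verification once the two ``bitplanes'' produced by $\tau$ are correctly identified. The only point that deserves a moment of care is the run-length claim, where one should note that it is only the implication ``DNA run of length $k$ $\Rightarrow$ $\bm{c}$-run of length $k$'' that is needed (not the converse), so the $\ell$-run-length hypothesis on $\bm{c}$ alone suffices and no assumption on the run lengths of $\bm{d}$ is required.
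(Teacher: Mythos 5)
Your proof is correct and follows exactly the route the paper intends: the paper states this lemma without an explicit proof, presenting it as an immediate consequence of the definition of $\tau$ and Lemma \ref{TC-dominant}, and your bitplane verification (first bits give TC-dominance hence 3-SSA, equal adjacent DNA symbols force equal adjacent bits in $\bm{c}$ hence the run-length bound, second bits give the GC weight hence balance) is precisely the routine check being omitted.
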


It suffices to construct the two classes of binary sequences as illustrated in Lemma \ref{lem-1}.
We first construct binary sequences that are ``0"-$3$-dominant and $\ell$-run-length limited for $\ell\geq3$. Define $f(\ell, n)\subseteq\mathbb{Z}_2^n$ to be the set of binary sequences that are ``0"-$3$-dominant and $\ell$-run-length limited. Define $f_i(\ell, n)$ to be the subset of $f(\ell, n)$ where each sequence contains $i$ zeros in the first run. Then we have the following Lemma.
\begin{lemma}\label{lem-f}
The size of  $f(\ell, n)$ and $f_0(\ell, n)$ satisfy the following equations.
$$\begin{aligned}
|f(\ell, n)|=&\sum_{i=0}^\ell|f_i(\ell, n)|=\sum_{i=0}^\ell|f_0(\ell, n-i)|, \\
|f_0(\ell, n)|=&\sum_{i=0}^{\ell-2}|f_0(\ell, n-3-i)|.
\end{aligned}$$
\end{lemma}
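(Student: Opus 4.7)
The plan is to prove both recurrences by a direct structural decomposition, using only the definitions of ``$0$''-$3$-dominance and the $\ell$-run-length limit. The first step would be to justify the partition $f(\ell,n)=\bigsqcup_{i=0}^{\ell}f_i(\ell,n)$. Any $\bm c\in f(\ell,n)$ begins either with a $1$ (whence the first run has $i=0$ leading zeros, so $\bm c\in f_0(\ell,n)$) or with a run of $1\leq i\leq\ell$ zeros; since $\bm c$ is $\ell$-run-length limited, $i$ cannot exceed $\ell$. This immediately gives $|f(\ell,n)|=\sum_{i=0}^{\ell}|f_i(\ell,n)|$.

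Next I would establish the bijection $\varphi_i: f_i(\ell,n)\to f_0(\ell,n-i)$ for each $1\leq i\leq\ell$ that simply strips the leading $i$ zeros. The resulting sequence starts with $1$, so it lies in $f_0(\ell,n-i)$ provided both constraints are preserved; $\ell$-run-length limit is clear, and ``$0$''-$3$-dominance is also clear since every length-$3$ window of the truncated sequence is a window of the original. For the inverse, I would prepend $i$ zeros to any $\bm c'\in f_0(\ell,n-i)$. Here a small verification is required at the junction: $\bm c'$ must start with $1,0,0$ (forced by $3$-dominance applied to the first window of $\bm c'$), so the windows straddling the boundary read $0,0,1$, $0,1,0$, $1,0,0$, each having at least two zeros; and the run-length of the initial zero block becomes $i\leq\ell$. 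This yields $|f_i(\ell,n)|=|f_0(\ell,n-i)|$ and completes the first identity.

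For the second identity, I would analyse the prefix of any $\bm c\in f_0(\ell,n)$. Since $\bm c$ starts with $1$, the first window $1,c_2,c_3$ must contain at least two zeros, forcing $c_2=c_3=0$. Thus $\bm c$ begins with $1,0^k$ where $k$ is the length of the first run of zeros, and necessarily $2\leq k\leq\ell$. I would then argue that $\bm c$ is uniquely determined by $k$ and the suffix $\bm s$ of length $n-k-1$ appearing after this prefix: either $k=n-1$ and $\bm s$ is empty (using the convention $|f_0(\ell,0)|=1$), or the run ends and $\bm s$ starts with $1$, so $\bm s\in f_0(\ell,n-k-1)$. The only boundary windows to recheck are $0,0,1$ at the end of the zero run and those entering $\bm s$, and both are covered by the $3$-dominance of $\bm s$ itself. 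Summing over $k=2,\dots,\ell$ and reindexing $i=k-2$ gives $|f_0(\ell,n)|=\sum_{i=0}^{\ell-2}|f_0(\ell,n-3-i)|$.

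The main obstacle I anticipate is purely bookkeeping at the boundaries: carefully verifying that prepending or removing a block of zeros does not break $3$-dominance on any window that straddles the cut, and handling the terminal cases (such as $\bm c=1,0^{n-1}$) with a consistent convention for $|f_0(\ell,0)|$. Everything else is a straightforward case split on the first-run length.
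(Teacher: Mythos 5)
Your proposal is correct and follows essentially the same route as the paper: partition $f(\ell,n)$ by the length of the leading zero run, strip those zeros to get the bijection with $f_0(\ell,n-i)$, and for the second identity use that ``0''-3-dominance forces any element of $f_0(\ell,n)$ to begin with $100$, then recurse on the remaining first-run length. Your version merely adds the (worthwhile but routine) boundary-window checks and the convention for the terminal case $10^{n-1}$, which the paper leaves implicit.
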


\begin{proof}
Note that a ``0"-$3$-dominant binary sequence has no $\ell$-runs of symbol "1" since $\ell\geq3$. And $f(\ell, n)$ can be partitioned into $\ell+1$ disjoint subsets $f_i(\ell, n)$, $i\in\{0,1,\cdots,\ell\}$ according to the number of zeros in the first run of each sequence. Thus $|f(\ell, n)|=\sum_{i=0}^\ell|f_i(\ell, n)|$.
Next we build a bijection from $f_i(\ell, n)$ to $f_0(\ell, n-i)$ as follows.
$$\underbrace{0 \cdots 0}_{i}1x_{i+2}\cdots x_{n} ~~\mapsto ~~1x_{i+2}\cdots x_{n}.$$
Then $|f_i(\ell, n)|=|f_0(\ell, n-i)|$ for $i=0,...,\ell$ and $|f(\ell, n)|=\sum_{i=0}^\ell|f_i(\ell, n)|=\sum_{i=0}^\ell|f_0(\ell, n-i)|.$

On the other hand, let $\bm{x}\in f_0(\ell, n)$, then $\bm{x}$ has the form $100x_4\cdots x_n$ since $\bm{x}$ is ``0"-3-dominant. Moreover, the subsequence $x_4\cdots x_n$ following ``100" must be ``0"-$3$-dominant and $\ell$-run-length limited with no more than $\ell-2$ zeros in the first run.
Thus $|f_0(\ell, n)|=\sum_{i=0}^{\ell-2}|f_i(\ell, n-3)|=\sum_{i=0}^{\ell-2}|f_0(\ell, n-3-i)|$.
\end{proof}

By Lemma \ref{lem-f}, we have a recursive method to construct $f(\ell, n)$ once the $\ell+1$ sets $f_0(\ell, n-i)$, $i=0,\cdots,\ell$ are determined.
And $|f(\ell, n)|\approx(\ell+1)\rho^n$ when $n$ is sufficiently large, where $\rho$ is the largest real root of the equation $x^{\ell+1}-\sum_{i=0}^{\ell-2}x^{i}=0$.
To illustrate this, we give an example.

\begin{example}
Set $\ell=4$. We give $f_0(4,n)$ for $n\in\{4, 5, 6, 7\}$ in Table \ref{t1}.

{\footnotesize
\begin{table}[h]
  \centering
\caption{$f_0(4,n)$ for $n\in\{4,5,6,7\}$}\label{t1}
\begin{tabular}{c|c|c|c|c}
\hline
        $n$    &4 &5 &6 &7 \\ \hline
$|f_0(4,n)|$ &2 &3 &3 & 4 \\ \hline
$f_0(4,n)$ &\makecell[l]{1000\\ 1001} & \makecell[l]{10000, 10001 \\ 10010} & \makecell[l]{100001, 100010 \\ 100100} & \makecell[l]{1000010, 1000100 \\ 1001000, 1001001} \\ \hline
\end{tabular}
\end{table}
}
One can recursively construct $f(4, n)$ by Lemma \ref{lem-f}. We construct $f(4, n)$ for $n=7,8$ in Table \ref{t2}.

{\footnotesize
\begin{table}[h]
  \centering
\caption{$f(4,n)$ for $n=7, 8$}\label{t2}
\begin{tabular}{c|c|c}
\hline
        n   &7 &8   \\ \hline
$|f(4,n)|$ & 13&18  \\ \hline
$f(4,n)$ & \makecell[l]{0000100, 0001000\\ 0001001,
0010000 \\0010001, 0010010 \\ 0100001, 0100010\\ 0100100, 1000010 \\1000100, 1001000 \\ 1001001}
& \makecell[l]{00001000, 00001001
00010000\\ 00010001, 00010010,  00100001 \\ 00100010, 00100100,  01000010\\ 01000100, 01001000,  01001001\\
10000100, 10001000, 10001001 \\ 10010000, 10010001, 10010010
}
 \\ \hline
\end{tabular}
\end{table}
}
When $n=100$, we compute that $|f(4,100)|\approx2^{43}$.
When $n$ tends to infinity, $f_0(4, n)$ has cardinality approximately $\rho^n$ where $\rho=1.3247$ is the largest real root of the equation $x^5-x^2-x-1=0$. Then $|f(4, n)|\approx5\rho^n$.
\end{example}

Given some real $0<\epsilon<0.5$, \cite{bib13encoder} showed that there
exists a linear-time encoder ${\rm ENC}_{\epsilon}$ that encodes binary data to $\epsilon$-balanced codewords of length $n$ with only one redundant bit.
Next, we use $f(\ell,n)$ and the encoder ${\rm ENC}_{\epsilon}$ to construct a DNA code.

\begin{theorem}\label{thm1}
Let $\epsilon$ be a small real with $0<\epsilon<0.5$. Suppose $n,\ell$ are positive integers with $n>(1/\epsilon^2)\log_en$ and $\ell\geq3$.
Given the construction $f(\ell,n)$ and the $\epsilon$-balance encoder ${\rm ENC}_{\epsilon}$, define the DNA code
$$
\begin{aligned}
\mathcal{C}_{\ell,\epsilon}=\{\tau^{-1}(\bm{c}||\bm{y}): \bm{c}\in f(\ell,n),~ \bm{y}=&{\rm ENC}_{\epsilon}(\bm{x}),\\
& ~{\rm with} ~  \bm{x}\in\mathbb{Z}_2^{n-1}\}.
\end{aligned}
$$
\end{theorem}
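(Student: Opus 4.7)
I read the theorem as asserting that every codeword of $\mathcal{C}_{\ell,\epsilon}$ is simultaneously $3$-SSA, $\ell$-run-length limited, and GC-$\epsilon$-balanced, and that the rate of $\mathcal{C}_{\ell,\epsilon}$ tends to $1+\log_2\rho$ as $n\to\infty$, where $\rho$ is the largest real root of $x^{\ell+1}-\sum_{i=0}^{\ell-2}x^{i}=0$. The argument naturally splits into a structural part (verifying the three biochemical constraints) and a counting part (computing the rate).

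The structural part is essentially bookkeeping on top of Lemma~\ref{lem-1}. For any $\bm{c}\in f(\ell,n)$ the definition of $f(\ell,n)$ already gives that $\bm{c}$ is ``$0$''-$3$-dominant and $\ell$-run-length limited. For any $\bm{x}\in\mathbb{Z}_2^{n-1}$ the guarantee of ${\rm ENC}_{\epsilon}$ from \cite{bib13encoder} (applicable because $n>(1/\epsilon^2)\log_e n$) produces an $\epsilon$-balanced $\bm{y}={\rm ENC}_{\epsilon}(\bm{x})\in\mathbb{Z}_2^n$. Feeding $(\bm{c},\bm{y})$ into Lemma~\ref{lem-1} then immediately yields that $\tau^{-1}(\bm{c}||\bm{y})\in\sum_{\rm DNA}^n$ is $3$-SSA, $\ell$-run-length limited, and GC-$\epsilon$-balanced, establishing the constraint part for every codeword.

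For the counting part, the encoding map $(\bm{c},\bm{x})\mapsto\tau^{-1}(\bm{c}||{\rm ENC}_{\epsilon}(\bm{x}))$ is injective: the interleaving $||$ and the map $\tau$ are bijections and ${\rm ENC}_{\epsilon}$ is an injective encoder. Hence $|\mathcal{C}_{\ell,\epsilon}|=2^{n-1}\cdot|f(\ell,n)|$ with code length $n$, so the rate equals $\tfrac{n-1}{n}+\tfrac{1}{n}\log_2|f(\ell,n)|$. To estimate $|f(\ell,n)|$ I apply Lemma~\ref{lem-f}: the recursion $|f_0(\ell,n)|=\sum_{i=0}^{\ell-2}|f_0(\ell,n-3-i)|$ is a linear recurrence with characteristic polynomial $x^{\ell+1}-\sum_{i=0}^{\ell-2}x^{i}$, whose largest real root is exactly $\rho$. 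Standard asymptotics of nonnegative linear recurrences give $|f_0(\ell,n)|=\Theta(\rho^n)$, and the first identity of Lemma~\ref{lem-f} then gives $|f(\ell,n)|=\Theta(\rho^n)$ as well. Substituting into the rate formula drives the rate to $1+\log_2\rho$ since the $-1/n$ redundancy cost of ${\rm ENC}_{\epsilon}$ vanishes in the limit.

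The one step requiring mild care is the dominant-root argument: I need that $\rho$ is the unique root of maximum modulus of the characteristic polynomial and that the leading coefficient attached to $\rho^n$ in the closed form of $|f_0(\ell,n)|$ is nonzero, so that the $\Theta(\rho^n)$ asymptotic is not degraded. Both points follow from a Perron-type argument applied to the nonnegative companion matrix of the recurrence, together with a check of the small-$n$ base cases tabulated after Lemma~\ref{lem-f} to rule out degenerate initial conditions. Everything else — the single-bit redundancy of ${\rm ENC}_{\epsilon}$, the bijectivity of $\tau$ and $||$, and the constraint verification — is direct.
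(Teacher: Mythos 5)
Your proposal is correct and follows essentially the same route as the paper: the paper establishes the constraints via Lemma~\ref{lem-1} together with the definition of $f(\ell,n)$ and the guarantee of ${\rm ENC}_{\epsilon}$, and obtains the cardinality $2^{n-1}|f(\ell,n)|$ and the rate $1+\log_2\rho$ from the recurrence in Lemma~\ref{lem-f} (the paper states these conclusions in the remark following the theorem rather than in a separate proof). Your added care about the dominant root of the characteristic polynomial is a detail the paper leaves implicit, and it goes through since the nonzero recurrence coefficients sit at lags $3,\dots,\ell+1$ with $\gcd=1$.
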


\begin{remark}
$\mathcal{C}_{\ell,\epsilon}\subseteq\sum_{\rm DNA}^{n}$ is 3-SSA, $\ell$-run-length limited and GC-$\epsilon$-balanced.
Moreover, $\mathcal{C}_{\ell,\epsilon}$ has cardinality $2^{n-1}|f(\ell,n)|$ and rate $\frac{n-1}{n}+\frac{1}{n}\log_2|f(\ell,n)|\approx1+\log_2\rho$ when $n$ is sufficiently large, where $\rho$ is the largest real root of the equation $x^{\ell+1}-\sum_{i=0}^{\ell-2}x^{i}=0$.
Particularly, when $\ell=4, \epsilon=0.1$, $\mathcal{C}_{4, 0.1}$ has rate approximately $1+\log_2\rho\approx1.4057$.
\end{remark}

\subsection{Constrained codes with error correction}\label{subsecB}
We modify the code $\mathcal{C}_{\ell,\epsilon}$ such that it can correct errors.

{\it Construction II.
Given $\ell\geq3$ and $0<\epsilon<0.5$, the code construction includes the following two steps.
\begin{itemize}
\item {\it Extending step.} Let $\bm{y}=(y_1,y_2,\cdots, y_n)\in\mathcal{C}_{\ell,\epsilon}$, where $\mathcal{C}_{\ell,\epsilon}$ is given in Theorem \ref{thm1}. Set $\bm{z}_1(\bm{y})={\rm TC}$ if $y_1\neq{\rm C}$, otherwise set $\bm{z}_1(\bm{y})={\rm CT}$. Similarly, set $\bm{z}_2(\bm{y})={\rm TC}$ if $y_n\neq{\rm T}$, otherwise set $\bm{z}_2(\bm{y})={\rm CT}$.
    Define $S=\{\bm{z}_1(\bm{y})\bm{y}\bm{z}_2(\bm{y}): \bm{y}\in\mathcal{C}_{\ell,\epsilon}\}\subseteq\sum_{\rm DNA}^{n+4}$.
\item {\it Concatenation step.} Let $q$ be a prime power with $q\leq|S|$, and $r$ be a positive integer. Let ${\rm Ham}(r;q)$ be the $[t=\frac{q^r-1}{q-1}, t-r, 3]$ Hamming code over $\mathbb{F}_q$. Let $\pi$ be an injective map from $\mathbb{F}_q$ to the set $S$, then
    $$
    \mathcal{C}_e=\{(\pi(c_1), \pi(c_2),...,\pi(c_t)): (c_1,...,c_t)\in {\rm Ham}(r;q)\}
    $$
\end{itemize}
}

\begin{theorem} Denote $N=(n+4)t$.
The code $\mathcal{C}_e\subseteq\sum_{\rm DNA}^{N}$ in Construction II is $3$-SSA, $\ell$-run-length limited and GC-$\epsilon$-balanced for some $\ell\geq3$ and $0<\epsilon<0.5$, and can correct one substitution error. Moreover, $\mathcal{C}_e$ has {rate} $\frac{t}{N}(1-\frac{r}{t})\log_2q$. When $q$ is close to $|S|=|\mathcal{C}_{\ell, \epsilon}|$, then it has rate approximately $(1-\frac{r}{t})(1+\log_2\rho)$, where $\rho$ is the largest real root of the equation $x^{\ell+1}-\sum_{i=0}^{\ell-2}x^{i}=0$.
\end{theorem}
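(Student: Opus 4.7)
The plan is to verify the three biochemical constraints, then the single-substitution correction, and finally the rate, in that order.

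For the constraints, the key observation is that every $\bm{y}\in\mathcal{C}_{\ell,\epsilon}$ is actually TC-$3$-dominant (every length-$3$ window of $\bm{y}$ contains at least two symbols from $\{\text{T},\text{C}\}$), $\ell$-run-length limited, and GC-$\epsilon$-balanced---this is exactly the structure produced by Lemma~\ref{lem-1} together with the ``$0$''-$3$-dominance of the first factor from $f(\ell,n)$. Appending $\bm{z}_1(\bm{y}),\bm{z}_2(\bm{y})\in\{\text{TC},\text{CT}\}$ preserves TC-$3$-dominance inside each extended block because both added pairs consist only of T's and C's; the only windows meriting a check are $(\bm{z}_1)_2\, y_1\, y_2$ and $y_{n-1}\, y_n\, (\bm{z}_2)_1$, and TC-$3$-dominance of $\bm{y}$ forces at least one of $\{y_1,y_2\}$ (resp.\ $\{y_{n-1},y_n\}$) to be a T or C, which is enough. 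By construction, the letter of $\bm{z}_i$ adjacent to $\bm{y}$ is chosen to differ from $y_1$ (resp.\ $y_n$), so no run of $\bm{y}$ is lengthened, and alternation inside each $\bm{z}_i$ is trivial. The GC weight grows by exactly $2$ over $4$ extra positions, so $|(pn+2)/(n+4)-1/2|\leq n\epsilon/(n+4)<\epsilon$.

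Next I would handle concatenation across block boundaries. Because both neighboring $\bm{z}$ pairs lie in $\{\text{TC},\text{CT}\}$, every length-$3$ window straddling a boundary contains at least two T/C symbols, so TC-$3$-dominance propagates to the whole codeword and Lemma~\ref{TC-dominant} then yields $3$-SSA globally. The four letters around each boundary form one of $\text{TCTC}$, $\text{TCCT}$, $\text{CTTC}$, $\text{CTCT}$, each with maximum run $2\leq\ell$, so the $\ell$-run-length-limited property is preserved. GC-balance is additive in both numerator and denominator across blocks, so the bound $|\cdot-1/2|<\epsilon$ survives concatenation unchanged.

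For single-substitution correction, one DNA substitution lies in exactly one of the $t$ blocks. The decoder applies $\pi^{-1}$ to each received block $\tilde{b}_i$ whenever $\tilde{b}_i\in\pi(\mathbb{F}_q)$, and otherwise marks position $i$ as an erasure. The resulting word in $\mathbb{F}_q^t$ differs from the transmitted Hamming codeword in at most one position---either as a substitution (when the corrupted block happens to hit $\pi(\mathbb{F}_q)$) or as an erasure (otherwise)---both of which are correctable because $\mathrm{Ham}(r;q)$ has minimum distance $3$.

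The rate is immediate from $|\mathcal{C}_e|=q^{t-r}$ and $N=(n+4)t$, giving $(t-r)\log_2 q/N=(t/N)(1-r/t)\log_2 q$. Taking $q$ close to $|S|=|\mathcal{C}_{\ell,\epsilon}|$ and letting $n\to\infty$, the factor $\log_2 q/(n+4)$ tends to $1+\log_2\rho$ by the remark after Theorem~\ref{thm1}, so the asymptotic rate is $(1-r/t)(1+\log_2\rho)$. The main obstacle is the boundary bookkeeping in the constraint verification; once TC-$3$-dominance is identified as the governing invariant, the other steps are essentially mechanical.
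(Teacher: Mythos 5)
Your proof is correct and follows essentially the same route as the paper: verify that the extension by $\bm{z}_1,\bm{z}_2$ and the concatenation preserve TC-$3$-dominance, the run-length limit, and GC-balance, then use the distance-$3$ property of the Hamming code for single-substitution correction, and compute the rate from $|\mathcal{C}_e|=q^{t-r}$. The only difference is that you spell out the boundary-window checks and an explicit erasure-aware decoder where the paper simply asserts these steps are ``easy to verify'' and invokes minimum distance $\geq 3$.
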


\begin{proof}
Let $(\pi(c_1),...,\pi(c_t))\in\mathcal{C}_e$ with some $(c_1,...,c_t)\in{\rm Ham}(r;q)$. We first prove $(\pi(c_1),...,\pi(c_t))$ is $3$-SSA, $\ell$-run-length limited and GC-$\epsilon$-balanced. By the construction, $\pi(c_i)\in S$ for $i\in[t]$.
Since every codeword $\bm{y}$ in $\mathcal{C}_{\ell,\epsilon}$ is TC-$3$-dominant, $\ell$-run-length limited and GC-$\epsilon$-balanced, it is easy to verify $\bm{z}_1(\bm{y})\bm{y}\bm{z}_2(\bm{y})\in S$ is also TC-$3$-dominant, $\ell$-run-length limited and GC-$\epsilon$-balanced by the definition of $\bm{z}_1(\bm{y}), \bm{z}_2(\bm{y})$, and the fact that $\ell\geq3$. Moreover, the concatenated codeword $(\pi(c_1),...,\pi(c_t))$ still retains these properties.
On the other hand, it is easy to see that $\mathcal{C}_e$ has Hamming distance $\geq3$, thus can correct one substitution error. The code rate can be computed since $|\mathcal{C}_e|=q^{t-r}$, where $t=\frac{q^r-1}{q-1}$.
\end{proof}


With a correspondence A $\rightarrow$ 0, T $\rightarrow$ 1, C $\rightarrow$ 2, G $\rightarrow$ 3, one can map a DNA code $\mathcal{C}$ to an isomorphic code over $\mathbb{Z}_4$, which we still call $\mathcal{C}$ if there is no ambiguity on the alphabet. Next we show the existence of a DNA code that has constraints in Section \ref{subsecA} and can also correct one edit error.

{\it Construction III. Suppose $\ell\geq3$ and $0<\epsilon<0.5$. Let $\mathcal{C}_e$ be constructed in Construction II and denote by $N=(n+4)t$ the code length of $\mathcal{C}_e$. For a pair $(a,b)\in \mathbb{Z}_{N}\times\mathbb{Z}_4$, let
\begin{eqnarray}
&&\mathcal{C}_{a,b}=\mathcal{C}_{e}\cap\mathrm{T}_{a,b}(N;4)=\{\bm{c}\in\mathcal{C}_{e}:  \nonumber  \\
&&\mathrm{Syn}(\alpha(\bm{c}))\equiv a \ (\mathrm{mod}\ N);\ \sum_{i=1}^{N}c_i\equiv b\ (\mathrm{mod}\ 4)\}, \nonumber
\end{eqnarray}
then there exists some $(a_0,b_0)\in \mathbb{Z}_{N}\times\mathbb{Z}_4$ such that $|\mathcal{C}_{a_0,b_0}|\geq\frac{|\mathcal{C}_{e}|}{4N}$ by the Pigeon Hole principle.
That is, there exists a DNA code $\mathcal{C}_{a_0,b_0}$ with code rate $R_e-\frac{2+\log_2N}{N}$, where $R_e$ is the rate of $\mathcal{C}_{e}$.
}

Note that the DNA code given in Construction III is a subcode of both $\mathcal{C}_{e}$ and a quaternary VT code, thus retains the constrained property and error correcting property of both codes.

\section{GC-$(s, \delta)$-locally balanced codes}\label{sec-5}

For an arbitrary $0<\delta<0.5$ and a positive integer $s$, we give a construction of GC-$(s, \delta)$-locally balanced code. The code also possesses 3-SSA and $\ell$-run-length limited property, and can correct one edit error.
The idea is to establish a connection between GC-locally balanced codes and GC-globally balanced codes through GC-partition balanced codes. Then we are able to construct GC-locally balanced codes using the codes constructed in previous sections.

\begin{lemma}\label{lemma-partition}
Let $0<\epsilon<0.5$ be a small real, and $s_0\in\mathbb{N}$. Suppose $\mathcal{C}$ is a GC-$\epsilon$-globally balanced code of length $s_0$, then for any $t\in\mathbb{N}$, there exists a GC-$(s_0, \epsilon)$-partition balanced code $\mathcal{C'}$ of length $ts_0$ with cardinality $|\mathcal{C'}|=|\mathcal{C}|^t$.
\end{lemma}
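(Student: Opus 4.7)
The plan is to build $\mathcal{C}'$ as the $t$-fold concatenation product of $\mathcal{C}$ with itself. Concretely, I would set
$$\mathcal{C}' = \{\bm{c}^{(1)}\bm{c}^{(2)}\cdots\bm{c}^{(t)} : \bm{c}^{(i)}\in\mathcal{C} \text{ for each } i\in[t]\}\subseteq\sum\nolimits_{\rm DNA}^{ts_0},$$
where each codeword is formed by writing $t$ codewords of $\mathcal{C}$ consecutively. Since each $\bm{c}^{(i)}$ has length $s_0$, each element of $\mathcal{C}'$ has length $ts_0$ and is naturally partitioned into $t$ blocks of size $s_0$ that coincide exactly with the chosen $\bm{c}^{(i)}$.

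Next I would check the partition-balance property directly from the definition. Fix any codeword $\bm{c}=\bm{c}^{(1)}\cdots\bm{c}^{(t)}\in\mathcal{C}'$; its $i$-th length-$s_0$ segment is $\bm{c}^{(i)}\in\mathcal{C}$. Because $\mathcal{C}$ is GC-$\epsilon$-globally balanced of length $s_0$, we have $\bigl|\tfrac{\mathrm{wt}_{\mathrm{GC}}(\bm{c}^{(i)})}{s_0}-0.5\bigr|\leq\epsilon$ for every $i\in[t]$, which is precisely the requirement for $\bm{c}$ to be GC-$(s_0,\epsilon)$-partition balanced. For the cardinality claim, I would note that the concatenation map $\mathcal{C}^t\to\mathcal{C}'$ is a bijection: injectivity is immediate because the partition into length-$s_0$ blocks uniquely recovers the $t$-tuple $(\bm{c}^{(1)},\ldots,\bm{c}^{(t)})$, and surjectivity holds by construction. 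Hence $|\mathcal{C}'|=|\mathcal{C}|^t$.

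There is essentially no obstacle here; the lemma is a definitional bridge between the global and partition notions. The only thing to be careful about is the implicit identification of $\mathcal{C}'$ as a subset of $\sum_{\rm DNA}^{ts_0}$ (rather than as a subset of $\mathcal{C}^t$), but since the block boundaries are fixed and known, nothing is lost in this identification. The real content of the lemma is conceptual: it justifies using the GC-globally balanced codes of Section~\ref{subsecA} and Section~\ref{subsecB} as building blocks for the locally balanced construction to follow.
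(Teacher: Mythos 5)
Your construction is exactly the one the paper uses (the paper simply states that $\mathcal{C}'=\{(\bm{c}_1,\ldots,\bm{c}_t):\bm{c}_i\in\mathcal{C}\}$ works and calls the verification obvious), and your explicit checks of the partition-balance property and the cardinality are correct. The proposal matches the paper's proof in both approach and substance.
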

\begin{proof}
The proof is obvious by constructing
$
\mathcal{C'}=\{(\bm{c}_1,\bm{c}_2,...,\bm{c}_t): \bm{c}_i\in\mathcal{C}, ~ i\in[t]\}.
$
\end{proof}

Lemma \ref{lemma-partition} gives a method converting GC-$\epsilon$-globally balanced codes to GC-$(s_0, \epsilon)$-partition balanced codes.
Next we continue to give a transformation from GC-$(s_0, \epsilon)$-partition balanced codes to GC-$(s, \delta)$-locally balanced codes.

\begin{lemma}\label{lemma-local}
Let $0<\epsilon<0.5$ be a small real, and $n, s, s_0\in\mathbb{N}$ with $s_0\mid n$ and $n\geq s>2s_0$. If $\mathcal{C}$ is a GC-$(s_0, \epsilon)$-partition balanced code of length $n$, then $\mathcal{C}$ is also GC-$(s, \delta=\frac{(s_0-1)(1-2\epsilon)}{s}+\epsilon)$-locally balanced.
\end{lemma}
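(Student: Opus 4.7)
The plan is to show that any length-$s$ sliding window inside a codeword of $\mathcal{C}$ necessarily contains enough complete partition blocks to inherit near-balanced GC content, with deviations controlled by at most two partial blocks at the two ends. The hypothesis $s>2s_0$ is precisely what guarantees that every such window contains at least one complete partition block, which is what lets us transfer the partition bound to the local bound.

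Concretely, for any $\bm{x}=(\bm{x}^{(1)},\ldots,\bm{x}^{(n/s_0)})\in\mathcal{C}$ and any $i\in[n-s+1]$, I would align the window $\bm{x}_i=(x_i,\ldots,x_{i+s-1})$ with the partition structure. Let $u\in\{0,1,\ldots,s_0-1\}$ be the length of the (possibly empty) head of $\bm{x}_i$ that sits inside a segment not entirely contained in the window, and let $v\in\{0,1,\ldots,s_0-1\}$ be the analogous tail length. Then $\bm{x}_i$ splits into a head of length $u$, a run of $f$ complete partition segments, and a tail of length $v$, so $s=u+fs_0+v$. Combined with $u+v\le 2(s_0-1)$, the assumption $s>2s_0$ forces $fs_0\ge 3$ and in particular $f\ge 1$.

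Writing $w=\mathrm{wt}_{\mathrm{GC}}(\bm{x}_i)$, the GC-$(s_0,\epsilon)$-partition balanced hypothesis gives $s_0(1/2-\epsilon)\le \mathrm{wt}_{\mathrm{GC}}(\bm{x}^{(j)})\le s_0(1/2+\epsilon)$ on each of the $f$ full segments, while the partial head and tail contribute weights in $[0,u]$ and $[0,v]$ respectively. Summing yields $fs_0(1/2-\epsilon)\le w\le fs_0(1/2+\epsilon)+u+v$. Substituting $fs_0=s-u-v$ into both sides collapses them into
\[
\left|w-\tfrac{s}{2}\right|\le \tfrac{u+v}{2}+fs_0\epsilon=(u+v)\bigl(\tfrac12-\epsilon\bigr)+s\epsilon.
\]
Dividing by $s$ and using $u+v\le 2(s_0-1)$ then gives $|w/s-1/2|\le \epsilon+(s_0-1)(1-2\epsilon)/s=\delta$, as required.

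I do not anticipate any substantive obstacle: the only subtlety is to set up the parametrization so that a "partial" head/tail is genuinely of length at most $s_0-1$ (with $u$ or $v$ set to $0$ when the window aligns with a segment boundary, in which case that boundary segment is counted inside $f$), and to notice that the $(1-2\epsilon)$ factor in $\delta$ emerges from the regrouping $fs_0\epsilon=(s-u-v)\epsilon$ against the combinatorial term $(u+v)/2$.
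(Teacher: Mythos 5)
Your proof is correct and follows essentially the same route as the paper's: decompose the length-$s$ window into a partial head, $f\ge 1$ complete partition segments, and a partial tail, bound the GC weight of each piece, and use $u+v\le 2(s_0-1)$ together with $fs_0=s-(u+v)$ to obtain $\delta=\epsilon+(s_0-1)(1-2\epsilon)/s$. The only cosmetic difference is that you symmetrize the two one-sided bounds into a single inequality $|w-s/2|\le (u+v)(\tfrac12-\epsilon)+s\epsilon$, whereas the paper checks the lower and upper conditions separately; the content is identical.
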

\begin{proof}
It is sufficient to prove that every codeword in $\mathcal{C}$ is GC-$(s, \delta)$-locally balanced.
Let $\bm{c}=(c_1,c_2,...,c_n)\in\mathcal{C}$, denote $\bm{c}=(\bm{c}^{(1)},\bm{c}^{(2)},...,\bm{c}^{(\frac{n}{s_0})})$, where $\bm{c}^{(i)}=(c_{(i-1)s_0+1},...,c_{is_0})$ for $i\in[\frac{n}{s_0}]$. We call $\bm{c}^{(i)}$ the $i$-th partition of $\bm{c}$, $\forall i\in[\frac{n}{s_0}]$.
Since $\bm{c}$ is GC-$(s_0, \epsilon)$-partition balanced, it has for each $i\in[\frac{n}{s_0}]$, $(0.5-\epsilon)s_0\leq \mathrm{wt}_{\mathrm{GC}}(\bm{c}^{(i)})\leq(0.5+\epsilon)s_0$.
Next we prove every length-$s$ substring of $\bm{c}$ is GC-$\delta$-globally balanced so that $\bm{c}$ is GC-$(s, \delta)$-locally balanced.

For any $j\in[n-s+1]$, consider the length-$s$ substring $\bm{c}_j=(c_j,c_{j+1},...,c_{j+s-1})$ of $\bm{c}$.
Since $s>2s_0$, then $\bm{c}_j$ must cover at least one complete partition $\bm{c}^{(i)}$ of $\bm{c}$ for some $i\in[\frac{n}{s_0}]$.
Suppose $\bm{c}_j$ covers in total $t\geq 1$ consecutive complete partitions $\bm{c}^{(i)},...,\bm{c}^{(i+t-1)}$, then we can write
$\bm{c}_j=(\underline{\bm{c}^{(i-1)}},\bm{c}^{(i)},...,\bm{c}^{(i+t-1)},\underline{\bm{c}^{(i+t)}})$, where $\underline{\bm{c}^{(i-1)}}$ is a substring of the partition $\bm{c}^{(i-1)}$ of length $a\in[0, s_0)$ and $\underline{\bm{c}^{(i+t)}}$ is a substring of $\bm{c}^{(i+t)}$ of length $b\in[0,s_0)$. Moreover, we have $m=a+ts_0+b$.
Then the GC-weight of $\bm{c}_j$ satisfies
\begin{equation}\label{local-GC}
(0.5-\epsilon)ts_0\leq \mathrm{wt}_{\mathrm{GC}}(\bm{c}_j)\leq(0.5+\epsilon)ts_0+a+b.
\end{equation}
Since $a,b\in[0,s_0)$, it has $a+b\leq 2(s_0-1)$ and $ts_0=s-(a+b)\geq s-2(s_0-1)$. The inequality (\ref{local-GC}) can be rewritten as follows.
\begin{equation*}
(0.5-\epsilon)(s-2(s_0-1))\leq \mathrm{wt}_{\mathrm{GC}}(\bm{c}_j)\leq(0.5+\epsilon)s+(1-2\epsilon)(s_0-1).
\end{equation*}
Then $\bm{c}_j$ is GC-$\delta$-globally balanced, if the following conditions hold:
\begin{equation*}
\begin{cases}
(0.5-\epsilon)(s-2(s_0-1))\geq(0.5-\delta)s,   \\
(0.5+\epsilon)s+(1-2\epsilon)(s_0-1)\leq(0.5+\delta)s.
\end{cases}
\end{equation*}
which gives $\delta\geq\frac{(s_0-1)(1-2\epsilon)}{s}+\epsilon$. This completes the proof.
\end{proof}

According to Lemma \ref{lemma-partition} and Lemma \ref{lemma-local}, we can give a code construction that is GC-locally balanced as follows.

{\it Construction IV. Let $\ell\geq3$ be a positive integer and $0<\epsilon<0.5$ be a small real. Suppose $s, s_0, t$ are positive integers with $ts_0\geq s>2s_0$.
Let $\mathcal{C}_{e}\subseteq\mathbb{Z}_4^{s_0}$ be the code constructed through Construction II in Section \ref{subsecB}.
Then $\mathcal{C}_{e}$ can be transformed to a GC-$(s, \delta=\frac{(s_0-1)(1-2\epsilon)}{s}+\epsilon)$-locally balanced code $\mathcal{C}_{\mathrm{local}}\subseteq\mathbb{Z}_4^{ts_0}$, using the method in Lemma \ref{lemma-partition} and Lemma \ref{lemma-local}.
Denote $N=ts_0$, and for a pair $(a,b)\in \mathbb{Z}_{N}\times\mathbb{Z}_4$, let
\begin{eqnarray}
&&\mathcal{C}_{a,b}=\mathcal{C}_{\mathrm{local}}\cap\mathrm{T}_{a,b}(N;4)=\{\bm{c}\in\mathcal{C}_{\mathrm{local}}:  \nonumber   \\
&&\mathrm{Syn}(\alpha(\bm{c}))\equiv a \ (\mathrm{mod}\ N); \ \sum_{i=1}^{N}c_i\equiv b\ (\mathrm{mod}\ 4)\},  \nonumber
\end{eqnarray}
then there exists some $(a_0,b_0)\in \mathbb{Z}_{N}\times\mathbb{Z}_4$ such that $|\mathcal{C}_{a_0,b_0}|\geq\frac{|\mathcal{C}_{\mathrm{local}}|}{4N}$ by the pigeonhole principle.
Define the DNA code to be $\mathcal{C}_{a_0,b_0}$.
}

Recall that $\mathcal{C}_{e}$ is a concatenated code with 3-SSA, $\ell$-run-length limited and GC-$\epsilon$-globally balanced property. $\mathcal{C}_{e}$ has Hamming distance at least 3. Then $\mathcal{C}_{\mathrm{local}}$ is also 3-SSA, $\ell$-run-length limited, and GC-$(s, \delta=\frac{(s_0-1)(1-2\epsilon)}{s}+\epsilon)$-locally balanced.
It is easy to verify $\mathcal{C}_{\mathrm{local}}$ also has Hamming distance at least 3. Thus the DNA code $\mathcal{C}_{a_0,b_0}$ satisfies these constraints and can correct one edit error.
Moreover, $\mathcal{C}_{a_0,b_0}$ has code rate $\frac{\log_2(|\mathcal{C}_{e}|^t/4ts_0)}{ts_0}=R_e-\frac{2+\log_2N}{N}$, where $R_e$ is the rate of $\mathcal{C}_{e}$.

\begin{remark}
Note that codes in \cite{bib12local} are GC-$\epsilon$-partition balanced and can correct single edits in each segment with a high rate $2(1-\log_2s/s)$, where the segment length $s=\Omega(\log N)$.
In our code, we allow $s$ to vary flexibly with $N$ and ensure local GC balance in each consecutive subsequence of length $s$, and also satisfy the 3-SSA and run-length limit constraint.

It's worth noting that our method in constructing $\mathcal{C}_{\ell, \epsilon}$ can adapt to binary encoders of \cite{bib13encoder, 2022ISIT} to directly enforce local GC-content constraint. However, it is not necessarily locally balanced after a straightforward concatenation.
\end{remark}

\end{document}